\newcommand*\rot{\rotatebox{90}}
\newtheorem{theorem}{Theorem}[section]
\title{New heuristics for burning graphs}
\author{
 Zahra Rezai Farokh \\
  Computer Science Department\\
  Shahid Beheshti University\\
  Tehran, Iran \\
  \texttt{rezaifarokhz@gmail.com} \\
  \And
 Maryam Tahmasbi \Letter \\
  Computer Science Department\\
  Shahid Beheshti University\\
  Tehran, Iran \\
  \texttt{m\_tahmasbi@sbu.ac.ir} \\
  \And
 Zahra Haj Rajab Ali Tehrani \\
  Computer Science Department\\
  Shahid Beheshti University\\
  Tehran, Iran \\
  \texttt{doorsatehrani@yahoo.com} \\
  \And
 Yousof Buali \\
  Computer Science Department\\
  Shahid Beheshti University\\
  Tehran, Iran \\
  \texttt{yousof.buali@gmail.com} \\
}
\begin{document}
\maketitle
\begin{abstract}
The concept of graph burning and burning number ($bn(G)$) of a graph
$G$ was introduced recently \cite{bonato2014burning}. Graph burning
models the spread of contagion(fire) in a graph in discrete time
steps. $bn(G)$ is the minimum time needed to burn a graph $G$. The
problem is NP-complete.

In this paper, we develop first heuristics to solve the problem in
general (connected) graphs. In order to test the performance of our
algorithms, we applied them on some graph classes with known burning
number such as $\theta-$graphs, we tested our algorithms on DIMACS
and BHOSLIB that are known benchmarks for NP-hard problems in graph
theory. We also improved the upper bound for burning number on
general graphs in terms of their distance to cluster. Then we
generated a data set of 2000 random graphs with known distance to
cluster and tested our heuristics on them.
\end{abstract}

\keywords{burning number, heuristic, distance to cluster, theta graphs,
DIMACS, BHOSLIB}

\section{Introduction}
\label{sec:introduction} Burning number of a graph is a new concept
that measures the speed of spreading a contagion (fire) in a graph \cite{bonato2014burning}. Given an undirected unweighted network,
the fire spread in the network synchronous rounds as follows: in
round one, a fire starts at a vertex called an activator. In each
following round two events happen:
\begin{enumerate}
   \item The fire spreads to all neighbors of nodes that are on fire.
   \item Fire starts at a new activator that is an unburned vertex.
\end{enumerate}
The process continues until all the vertices of the graph are on
fire. At this time we say that the burning process is
complete \cite{kamali2020burning}. A burning schedule specifies a
burning sequence of vertices where the $i$th vertex in the sequence
is the activator in round $i$. The burning number $bn(G)$ is the
minimum length of a burning sequence.

Problem: burning number

Input: a simple graph $G$ of order $n$ and an integer $k \geq 2$.\\
Question: is $bn(G) \leq k$? In other words, does $G$ contain a
burning sequence $(x_1, x_2,\ldots, x_k)$?

As the first result, some of the properties of this problem
including characterizations and bounds was presented in \cite{bonato2014burning,roshanbin2016burning}. Bonato et al. \cite{bonato2014burning} proved that the burning number of any
connected graph with $n$ vertices is at most $2\sqrt{n}-1$ and
conjectured that it is always at most $\lceil\sqrt{n}\rceil$. It is
proved that this problem is NP-complete even when restricted to
trees with maximum degree three, spider graphs and
path-forests \cite{bessy2017burning}. They developed polynomial-time
algorithms for finding the burning number of spider graphs and
path-forests if the number of arms and components, respectively, are
fixed. They also generated a polynomial-time approximation algorithm
with approximation factor 3 for general graphs \cite{bessy2017burning}. Bonato and Lidbetter \cite{bonato2019bounds} developed a 3/2-approximation algorithm for path forests (disjoint union of paths). There is another approximation algorithm with an approximation ratio of 2 for trees \cite{bonato2019approximation}.

In a recent study Kamali et al. \cite{kamali2020burning} considered
connected $n$-vertex graphs with minimum degree $\delta$. They
developed an algorithm that burns any such graph in at most
$\sqrt{\frac{24n}{\delta+1}}$ rounds. In particular, for graphs with
$\delta\in\theta(n)$, they proved that all vertices are burned in a
constant number of rounds. More interestingly, even when $\delta$ is
a constant that is independent of $n$, their algorithm answers the
graph-burning conjecture in the affirmative by burning the graph in
at most $\sqrt{n}$ rounds. Some results for different classes of
graphs are presented in Table \ref{tab:results}.

\v{S}imon et al. \cite{vsimon2019heuristics} developed some heuristics
for graph burning based on some centrality measures. They tested
their heuristics on limited number of networks.

\begin{table}[h]
\centering
    \caption{Results from previous works}
    \label{tab:results}
\begin{tabular}{|l|l|l|}
\hline
Graph Classes & Results & Reference  \\\hline
 trees(maximum degree 3) & NP-completeness & \cite{bessy2017burning}  \\
 spider graphs,path-forests& & \\
\hline
spider graphs and path-forests & polynomial time algorithms  & \cite{bessy2017burning} \\
\hline
trees &  2-approximation algorithm & \cite{bonato2019approximation} \\
\hline
 graph products & exact value & \cite{mitsche2018burning} \\
\hline
  petersen graph & exact value & \cite{sim2018burning} \\
\hline
  theta graph & exact value &\cite{liu2019burning} \\
\hline
 dense and tree-like graph & exact value & \cite{kamali2019burning} \\
\hline
 grid graph & exact value &\cite{bonato2018burning} \\
\hline
 graph with constant $\delta$ & algorithm with almost $\sqrt{\frac{24n}{\delta+1}}$ rounds & \cite{kamali2020burning} \\
\hline
 graph with pathlength $pl$  & algorithm with almost $\sqrt{d-1}+pl$  & \cite{kamali2020burning} \\
 and diameter $d$ & rounds & \\
\hline
 graphs with $\delta\in\theta(n)$ & proved all vertices are burned & \cite{kamali2020burning} \\
 & in a constant number of rounds & \\
\hline
\end{tabular}
\end{table}

In this paper, we develop new heuristic algorithms for solving graph
burning problem. As mentioned before, most of the studies on this
problem concern limited classes of graphs. Since the problem is
modeling the spread of contagion in a network, it is essential to
develop algorithms for solving the problem. We developed 6
heuristics for burning a graph. These heuristics differ in selecting
the first activator and also the order of selecting the following
activators.

Except for approximation algorithms \cite{bonato2019bounds,kamali2019burning,vsimon2019heuristics}, and
algorithm 1 in \cite{bonato2016burn} there are no official
algorithms for this problem, so to test the performance of our
algorithm, we used some theoretical results: we generated a random
class of theta graphs and a random class of graphs with known
distance to cluster and report the result of applying our algorithms
on these classes. We compared our results with exact values and
bounds reported in former studies. We also applied our algorithms on
various graphs in known data sets: DIMACS and Benchmarks with Hidden
Optimum Solutions for Graph Problems (BHOSLIB). These data sets
contain graphs with various sizes and structures and are benchmarks
for testing several NP-hard graph algorithms including but not
limited to the maximum clique problem, maximum independent set,
minimum vertex cover and vertex coloring.

This paper is organized as follows: in section \ref{sec:algorithms} we present some basic definitions and describe our heuristics. In section \ref{sec:experimental-study} we present the result of our experimental study on different data sets. And in section \ref{sec:conclusion} we state conclusions and future works.

\section{Algorithms}
\label{sec:algorithms} In this section, we present 6 heuristics for
solving graph burning problem. The output in each algorithm is a
burning sequence for the input graph $G$. First,  we review some
basic definitions and then we present our heuristics.

\subsection{Basic Definitions}
We review some basic definitions from graph theory
\cite{diestel2000graduate}. For a graph $G = (V,E)$, let $V(G)$ and
$E(G)$ denote the vertex set and edge set of $G$ respectively. We
use $n$ and $m$ to denote the number of vertices and edges in a
graph respectively. For a vertex $v\in V(G)$, $N(v)$ denotes the set
of vertices adjacent to $v$ and $N[v] = N(v)\cup\{v\}$ is the closed
neighborhood of $v$.

The distance $d(u, v)$ between two vertices $u$ and $v$ in a graph
$G$ is the number of edges in a shortest path from $u$ to $v$. Given
an integer $k$, $N_k[v]$ is the number of vertices with distance at
most $k$ of $v$. This set is called the $k$th neighborhood of $v$.
For a vertex $v$ in a graph $G$, the eccentricity of $v$ is defined
as $max\{d(u, v) | u \in V (G)\}$. The radius of $G$ is minimum
eccentricity over the set of all vertices in $G$. The diameter of
$G$ is the maximum eccentricity over the set of all vertices in $G$.
In other words, it is the distance between the farthest pair of
vertices in $G$. For a subset $X \subset V(G)$, the graph
$G[X]$ denotes the subgraph of $G$ induced by vertices of $X$.

The theta graph $\theta(l_1,\ldots,l_m)$ is a graph consisting of
$m$ pairwise internally disjoint paths with common endpoints and
lengths $l_1\leq\cdots\leq l_m$ \cite{eichhorn2000edge}.

\subsection{Heuristics}
In the proposed algorithms we have two steps: the first step is to
select the first candidate for burning. It seems essential since
this vertex will burn vertices in distance $bn(G)$ of the graph. So,
we need to select a vertex with a large set of vertices in
$N_{bn(G)}[v]$.

In the second step, we select the rest of the activators one by one.

Given a burning sequence $S=(x_1,x_2,\ldots,x_{bn(G)})$ of a graph
$G$, for each vertex $v$, there is a vertex $x_i$ in $S$ such that
$v$ is burned by a fire that is started in $x_i$, i.e.
$d(v,x_i)<d(v,x_j)$ for all $j \neq i$. we call $x_i$ the activator
of $v$.

To reduce the length of a burning sequence, it is good to select
activators such that each vertex has a unique activator.

We develop different heuristics based on different strategies for
the first and second steps.
\begin{enumerate}
    \item Based  on arguments in former paragraphs, we can choose the first activator from the center of the graph. The farthest vertex to this vertex is in distance $rad(G)$ of it. So, it seems that this vertex has a big $k$th neighborhood. We used this in step one of heuristics \emph{Ctr-Half dist.} and \emph{Ctr-Far dist.}.
    \item In each time $k$, for each unburned vertex $v$, we can calculate that in how many time steps this vertex will burn if we do not add any other activator. We call this “time-to-burn” of $v$ and denote it by $t^k(v)$.  Let $t=max\{t^k(v):v\in V\}$. Hence, $t$ is the maximum number of remaining time to burning the whole graph. We can select the next activator in two ways:
    \begin{enumerate}
        \item The next activator is a vertex $v$ with $t^k(v)=t/2$. In this way the vertices with greater time-to-burn will burn in shorter time, using this new activator.
        \item The next activator is a vertex with max -1 time-to-burn.
    \end{enumerate}
    Figure \ref{fig:image2} shows the two strategies. Heuristics \emph{Ctr-Half dist.} and \emph{Rnd-Half dist.} use the first and heuristics \emph{Ctr-Far dist.} and \emph{Rnd-Far dist.} use the second way for this step.
    \item In heuristics \emph{Rnd-Half dist.} and \emph{Rnd-Far dist.} we select the first activator in random to see the effect of selecting the first activator in our heuristics.
\end{enumerate}

\begin{figure}[h]
\begin{subfigure}{0.5\textwidth}
\includegraphics[width=0.9\linewidth, height=0.8\linewidth]{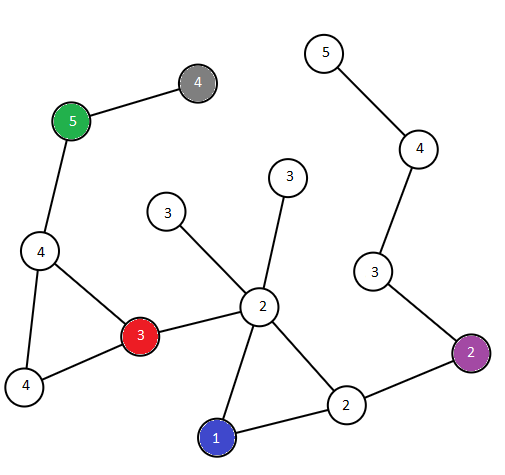}
\caption{Ctr-Half dist} \label{fig:subim1}
\end{subfigure}
\begin{subfigure}{0.5\textwidth}
\includegraphics[width=0.9\linewidth, height=0.8\linewidth]{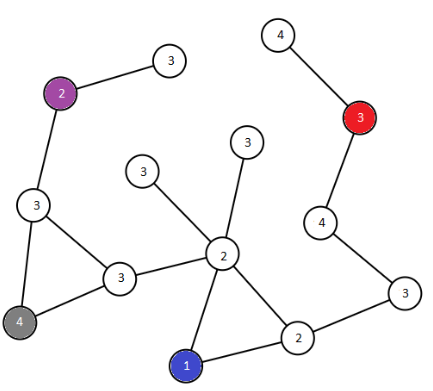}
\caption{Ctr-Far dist} \label{fig:subim2}
\end{subfigure}

\caption{Comparison of Ctr-Half dist and Ctr-Far dist. Activators
are colored vertices and the number inside each vertex shows the
time step it burns.} \label{fig:image2}
\end{figure}

Table \ref{tab:firstfour} summarizes the strategies in four
heuristics.

\begin{table}[h]
    \centering
    \caption{Summary of first four heuristics}
    \label{tab:firstfour}
    \begin{tabular}{|l|l|l|}
    \hline
        Heuristics & First Activator & Next Activator \\\hline
        Ctr-Half dist. & Center & \nicefrac{1}{2} time-to-burn \\\hline
        Ctr-Far dist. & Center & max time-to-burn \\\hline
        Rnd-Half dist. & Random & \nicefrac{1}{2} time-to-burn \\\hline
        Rnd-Far dist. & Random & max time-to-burn \\\hline
    \end{tabular}
\end{table}

We developed two other heuristics with a different idea: burning a
path! The main idea is finding the diameter of the graph and the
longest shortest path (the path with length $diam(G)$) and then
burning the vertices of this path with the same order as burning a
path in $\sqrt{diam(G)}$ steps. Since computing the diameter of a
graph is of large complexity, we use two approaches to find a good
approximation of that. In heuristic \emph{DFS-path} we select a
random vertex and perform a DFS algorithm to find a path.  In
heuristic \emph{D-BFS-path} we use the algorithm by Birmele et al.
\cite{birmele2017decomposing} to approximate the diameter. This
algorithm uses BFS twice, the first BFS starts from a random vertex
and the second one starts from one of the leaves of previous BFS.
This gives a 2-approximation of the diameter of the graph. There is
no guarantee that all vertices of the graph burn using only vertices
of these paths. So, after burning the vertices of the path, if there
is still an unburned vertex, we select them randomly as activators.

\section{Experimental Study}
\label{sec:experimental-study}
In this section, we describe the software platform, hardware, data and results considered in this matter. We implemented algorithms that were introduced and explained earlier in section \ref{sec:algorithms} using Python 3. To model our graphs in a proper data structure and apply fundamental graph algorithms and measures, we used the well-known NetworkX package introduced by Hagberg et al. \cite{hagberg2008exploring} back in 2008.


These algorithms were evaluated on a hardware specification consisting of 4 units of Intel\textregistered Xeon\textregistered Processor E5-2680 v4 (56 cores) and 32GB of memory. We have not seen a critical need to parallelize the algorithms except for data. Instances of graphs were being evaluated in parallel using a simple data pool technique. We will explain our evaluation challenges later in this section.

\subsection{Datasets}\label{sec:dataset}
As mentioned before, there is no algorithm for burning general
graphs. So, in order to evaluate our heuristics we use two types of
datasets:
\begin{enumerate}
    \item Classic datasets that are commonly used in some NP-hard problems in graph theory such as clique number, independence set, dominating number, etc.
    \item Random graphs in some classes that the exact value or a good bound on their burning number is computed before.
\end{enumerate}
In this section, we describe these datasets in more detail.

\subsubsection{Classic datasets}
These datasets are prepared for public use in the Network Repository website\footnote{\url{http://networkrepository.com/}} \cite{nr}. We use graphs in DIMACS and BHOSLIB datasets. DIMACS dataset has 78 large graphs with a maximum of 4000 vertices and over 5 million edges. BHOSLIB dataset has 36 large graphs with a maximum of 4000 vertices and over 7 million edges.

\subsubsection{Random $\theta$-graphs}
A $\theta-$graph consists of 3 (internally) disjoint paths between two vertices. In other words, a $\theta-$graph is a cycle with a disjoint path joining two vertices of it. We use the following simple technique to generate a random theta graph:
\begin{enumerate}
    \item Generate a cycle $C_m$ with a random number of vertices, $m$.
    \item Choose two random vertices from the cycle.
    \item Generate a path $P_l$ of random size $l$ to join these vertices.
\end{enumerate}
The number of vertices for cycle and path can be chosen either randomly or purposely.

\subsubsection{Random graphs with fixed distance to cluster}

Another class of graphs with known upper bound on burning number is
the class of graphs with fixed distance to cluster. Distance to
cluster is the minimum number of vertices that have to be deleted
from $G$ to get a disjoint union of complete graphs. This is an
intermediate parameter between the vertex cover number and the
clique-width/rank-width \cite{kare2019parameterized}.

We generated 1000 random graphs in this class. Each of these graphs
consists of $k$ complete graphs with a random size that is connected
to a path of length $d$ with a random number of edges. In fact, this is
a special class of these graphs. We used the path structure of the $d$
vertices, since paths have known burning number of $\sqrt{d}$ and
this seems to be the maximum size in connected graphs. We generated
these graphs using the following steps:
\begin{enumerate}
    \item Generate a path $P_d$ of length $d$.
    \item Generate a set of $k$ complete graphs with sizes $K=\{K_{n_i}~|~i\in[1,k]\}$.
    \item For each graph in $K$ choose a random number $a_i$, $1\leq~a_i<n_i-1$ and choose $a_i$ random pairs of vertices $(u,v)$ such that $u\in~P_d$ and $v\in~K_{n_i}$.
    \item Connect $P_d$ and complete graphs in $K$ with pairs of vertices chosen above.
\end{enumerate}

Again our code enables $d$ and $n_i$s to be chosen either randomly or purposely.

\subsection{DIMACS, BHOSLIB}

We applied our 6 heuristics on all graphs in DIMACS and BHOSLIB.
From 78 graphs in DIMACS, all heuristics computed a burning sequence
of length 3 for 71 graphs. Theorem \ref{thm:roshanbin} shows that
this is optimal.

\begin{theorem}
\label{thm:roshanbin}
Let $G$ be a graph with $n$ vertices. Then $bn(G)=2$ if and only if
$n\geq 2$ and $G$ has maximum degree $n-1$ or $n-2$ \cite{roshanbin2016burning}.
\end{theorem}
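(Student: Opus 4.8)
The plan is to unwind the two-round burning process into a condition on closed neighborhoods. First I would record that when $n\geq 2$ we automatically have $bn(G)\geq 2$: after the first round only the initial activator $x_1$ is on fire, so a burning sequence of length $1$ can complete the process only if $V(G)=\{x_1\}$. Writing $\Delta(G)$ for the maximum degree of $G$, it therefore suffices to show that, for $n\geq 2$, $bn(G)\leq 2$ holds if and only if $\Delta(G)\in\{n-1,n-2\}$. The key observation I would use is that the set of vertices on fire after two rounds of a sequence $(x_1,x_2)$ is exactly $N[x_1]\cup\{x_2\}$: in round $1$ the fire sits at $x_1$, and in round $2$ it spreads to $N(x_1)$ while a second fire is lit at $x_2$. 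Hence $(x_1,x_2)$ is a valid burning sequence precisely when $N[x_1]\cup\{x_2\}=V(G)$, that is, when $|V(G)\setminus N[x_1]|\leq 1$.

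With this in hand, both directions are quick. For the forward direction, I would take a burning sequence $(x_1,x_2)$ realizing $bn(G)=2$ and bound $n=|V(G)|\leq|N[x_1]|+1=\deg(x_1)+2$, so $\deg(x_1)\geq n-2$; since no vertex has degree more than $n-1$, this forces $\Delta(G)\in\{n-1,n-2\}$ (and $n\geq 2$ because $bn(G)=2\neq 1$). For the converse, I would pick a vertex $v$ with $\deg(v)=\Delta(G)\in\{n-1,n-2\}$, so that $V(G)\setminus N[v]$ has at most one vertex, set $x_1=v$, and let $x_2$ be that leftover vertex if it exists and otherwise any vertex other than $v$ (one exists since $n\geq 2$); in either case $N[x_1]\cup\{x_2\}=V(G)$, so $bn(G)\leq 2$, and combined with $bn(G)\geq 2$ this gives $bn(G)=2$.

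I do not expect a genuine obstacle here: everything follows from the definition of burning given in the introduction. The only points that need care are the degenerate case $n=1$ (which is exactly why the hypothesis $n\geq 2$ appears, since there $bn(G)=1$), the translation between the algebraic condition $\Delta(G)\in\{n-1,n-2\}$ and the combinatorial condition that $|V(G)\setminus N[v]|\leq 1$ for some vertex $v$, and fixing the convention that $(x_1,x_2)$ still counts as a burning sequence when $\deg(x_1)=n-1$ and $x_2$ is already reached by the spreading fire in round $2$ — which is automatic under the reformulation $N[x_1]\cup\{x_2\}=V(G)$.
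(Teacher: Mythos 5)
Your proof is correct. Note, however, that the paper does not prove this statement at all --- it is quoted as a known result from \cite{roshanbin2016burning} --- so there is no in-paper argument to compare against; your reformulation of a length-two burning sequence as $N[x_1]\cup\{x_2\}=V(G)$ and the resulting degree count is exactly the standard proof of this characterization, and it handles the $n\geq 2$ and $\Delta(G)=n-1$ edge cases properly.
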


Table \ref{tab:dimacs} shows the results for those DIMACS instances
with a burning sequence larger than 3. In these graphs, the average
vertex degree and max degree are very near and the number of edges
is small. This seems to be the result of the growing burning number.
Graphs c-fat200-* have 200 vertices and average degree grow from 15
to 84. As the average degree increases, the burning number decreases
and the results different heuristics converge. The same is true for
c-fat500-* graphs. On this benchmark, as the average degree grows,
the heuristics find shorter burning sequences.

\begin{table}[h]
    \centering
    \caption{DIMACS Results}
    \label{tab:dimacs}
    \begin{tabular}{|l|c|c|c|c|c|c|c|c|c|c|}
        \hline
        Name & \rot{Vertices} & \rot{Edges} & \rot{Max deg.} & \rot{Avg. deg.} & \rot{Ctr-Half} & \rot{Ctr-Far} & \rot{Rnd-Half} & \rot{Rnd-Far} & \rot{DFS-path} & \rot{D-BFS-path} \\\hline
        c-fat200-1 & 200 & 1534 & 17 & 15 & 11 & 8 & 9 & \textbf{7} & 8 & 8 \\\hline
        c-fat200-2 & 200 & 3235 & 34 & 32 & 6 & 6 & 6 & \textbf{5} & \textbf{5} & 6 \\\hline
        c-fat200-5 & 200 & 8473 & 86 & 84 & 4 & 4 & 4 & 4 & 4 & \textbf{3} \\\hline
        c-fat500-1 & 500 & 4459 & 20 & 17 & 12 & 11 & 12 & \textbf{10} & 15 & 17 \\\hline
        c-fat500-10 & 500 & 46627 & 188 & 186 & 4 & 4 & 4 & 4 & 4 & 4 \\\hline
        c-fat500-2 & 500 & 9139 & 38 & 36 & 9 & \textbf{8} & 9 & \textbf{8} & 11 & \textbf{8} \\\hline
        c-fat500-5 & 500 & 23191 & 95 & 92 & 6 & 6 & 6 & 6 & \textbf{5} & 6 \\\hline
    \end{tabular}
\end{table}

The results on BHOSLIB graphs are more interesting. All heuristics
compute 3 for all graphs. We note that the average degree is very
large compared with the number of vertices in these graphs. Since
maximum vertex degree in these graphs is less than $n-2$, by theorem
\ref{thm:roshanbin}, the burning sequence of length 3 is optimum for
all of them.


\subsection{$\theta-$Graphs}
This set of graphs is generated by our very own algorithms as
described earlier in this section.  We use a simple name convention
to understand the characteristics of each instance at a glance.
Graph names start with "theta" followed by some dash-separated
numbers, the number of vertices, sample and number of cycle and path
vertices, respectively. Our evaluation observed results for 2000
$\theta-$graphs ranging from 400 to 900 vertices. There are tight
bounds on the burning number of $\theta-$graphs that are proved by
Liu and et al. \cite{liu2019burning}. They showed that the burning
number of order $n=q^2+r$ with $1\leq r \leq 2q+r$ is either $q$ or
$q+1$. We compared our results with these bounds. In 1208 graphs
(\%60.4) the length of burning sequences in our heuristics meet
bounds and in \%81.7 of graphs, the difference is only one. The
average difference between our best results and upper bounds is 0.6
and standard deviation 1.2309.

Table \ref{tab:thetacomparison} shows the comparison of different
heuristics. DFS-path finds the shortest burning sequence in 1476
graphs that is more than 73.8\% of our graphs.

\begin{table}[h]
    \centering
    \caption{Comparison of different heuristics on theta graphs}
    \label{tab:thetacomparison}
    \begin{tabular}{|l|r|}
        \hline
        Heuristic & Success Rate \\\hline
        Ctr-Far & 16.9\% \\\hline
        Rnd-Far & 7.7\% \\\hline
        DFS-path & 73.8\% \\\hline
        D-BFS-path & 1.6\% \\\hline
    \end{tabular}
\end{table}

Table \ref{tab:theta} shows some randomly chosen samples of our
results. The expected burning number column is included in the
table.

\begin{table}[h]
    \centering
    \caption{Samples of $\theta-$graph Results}
    \label{tab:theta}
    \begin{tabular}{|l|c|c|c|c|c|c|c|}
        \hline
        Name & \rot{Expected\cite{liu2019burning}} & \rot{Ctr-Half} & \rot{Ctr-Far} & \rot{Rnd-Half} & \rot{Rnd-Far} & \rot{DFS-path} & \rot{D-BFS-path} \\\hline
        theta529-74-472-57 & \textbf{24} & 28 & 26 & 30 & 26 & \textbf{24} & 29 \\\hline
        theta784-99-493-291 & \textbf{29} & 33 & 32 & 37 & 32 & \textbf{29} & 51 \\\hline
        theta676-82-647-29 & \textbf{27} & 34 & 30 & 33 & 30 & \textbf{27} & 43 \\\hline
        theta676-115-163-513 & \textbf{27} & 35 & 28 & 36 & 31 & 31 & 34 \\\hline
        theta676-10-570-106 & \textbf{27} & 33 & 29 & 33 & 29 & 28 & 36 \\\hline
    \end{tabular}
\end{table}

\subsection{Graphs with fixed distance to cluster}
This set of graphs is also generated randomly and has a name
convention including characteristics of instances. Graph names start
with "cluster" which reminds us of the graph type followed by the
number of clusters ($K_n$s), minimum and maximum of random cluster
size, number of vertices on $P_d$, number of the whole graph
vertices and sample respectively. For this class of graphs, we
evaluated our algorithms for 1000 instances varying from 50 to 100
clusters of size 4 to 20 and path length from 500 to 1000.

Kare et al. \cite{kare2019parameterized} computed an upper bound for
graphs in terms of their distance to cluster, which is $3d+3$. We
improve this bound in the following theorem.

\begin{theorem}\label{thm:cluster}
Let $G$ be a graph and $A$ be a set of vertices such that
$G[V(G)\backslash A]$ is a cluster graph. Then $bn(G)\leq bn(G[A])+2$.
\end{theorem}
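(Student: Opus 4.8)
The plan is to take an optimal burning sequence for the cluster-complement $G[A]$ and extend it by two further activators to obtain a burning sequence for $G$. Write $k=bn(G[A])$ and let $(x_1,\dots,x_k)$ be an optimal burning sequence for $G[A]$. I will show that $(x_1,\dots,x_k,w_1,w_2)$ burns $G$ for \emph{any} $w_1,w_2\in V(G)$: the last two activators do no real work, they merely lengthen the schedule so that the fires lit at $x_1,\dots,x_k$ get two extra rounds to spread, which turns out to be exactly enough to reach the cliques hanging off $A$.

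The key point is that $G[A]$ is an \emph{induced} subgraph, so for vertices of $A$ the distances in $G$ are no larger than in $G[A]$; hence $N^{G[A]}_r[v]\subseteq N^{G}_r[v]$ for every $v\in A$ and $r\ge 0$. Using the standard characterization that a sequence $(y_1,\dots,y_\ell)$ burns a graph $H$ iff $\bigcup_{i=1}^{\ell}N^{H}_{\ell-i}[y_i]=V(H)$, optimality of $(x_1,\dots,x_k)$ for $G[A]$ means $\bigcup_{i=1}^{k}N^{G[A]}_{k-i}[x_i]=A$. In the extended length-$(k+2)$ schedule the source $x_i$ now covers $N^{G}_{(k+2)-i}[x_i]=N^{G}_{(k-i)+2}[x_i]\supseteq N^{G}_{2}[N^{G}_{k-i}[x_i]]\supseteq N^{G}_{2}[N^{G[A]}_{k-i}[x_i]]$, where the first inclusion is the triangle inequality ($N_{r+2}[v]\supseteq N_2[N_r[v]]$) and the second is monotonicity of $N_2[\cdot]$. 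Taking the union over $i=1,\dots,k$ and using that $N_2[\cdot]$ distributes over unions gives $\bigcup_{i=1}^{k}N^{G}_{(k+2)-i}[x_i]\supseteq N^{G}_{2}[A]$.

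It remains to verify $N^{G}_{2}[A]=V(G)$. Assume $G$ is connected and $A\neq\varnothing$ (if $A=\varnothing$ then $G$, being a connected cluster graph, is a complete graph and $bn(G)\le 2$, so the bound is trivial). Then every clique $C$ of the cluster graph $G[V(G)\setminus A]$ has some vertex $v_C$ adjacent to a vertex $a_C\in A$, since otherwise $C$ would be a connected component of $G$ disjoint from $A$. Because $C$ is a clique, every $w\in C$ satisfies $d_G(w,a_C)\le d_G(w,v_C)+1\le 2$, so $C\subseteq N^{G}_{2}[A]$; together with $A\subseteq N^{G}_{2}[A]$ this yields $N^{G}_{2}[A]=V(G)$. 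Consequently $\bigcup_{i=1}^{k+2}N^{G}_{(k+2)-i}[x'_i]\supseteq N^{G}_{2}[A]=V(G)$ for $(x'_1,\dots,x'_{k+2})=(x_1,\dots,x_k,w_1,w_2)$, so $bn(G)\le k+2=bn(G[A])+2$.

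I do not anticipate a serious obstacle; the only points needing care are (i) that $G[A]$ is induced, so the $G[A]$-schedule really covers $A$ when replayed inside $G$, (ii) the index bookkeeping when a length-$k$ schedule is embedded in a length-$(k+2)$ one, so that each old source gains precisely two extra rounds of spread, and (iii) the connectivity hypothesis, which is exactly what forces every clique of $G[V(G)\setminus A]$ to touch $A$ and therefore to lie within distance $2$ of it. If one prefers to avoid the covering characterization, the same argument runs round-by-round: after round $k$ all of $A$ is on fire, after round $k+1$ every clique contains a burning vertex, and after round $k+2$ every clique is entirely on fire.
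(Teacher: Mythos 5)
Your proof is correct and follows essentially the same route as the paper's: replay an optimal burning sequence of $G[A]$ inside $G$ and observe that two extra rounds suffice to consume the cliques of $G[V(G)\setminus A]$, each of which touches $A$ by connectivity. Your write-up is considerably more careful than the paper's two-sentence sketch --- in particular you make explicit the covering characterization, the monotonicity of distances under induced subgraphs, and the connectivity hypothesis (without which the bound fails, e.g.\ for a disjoint union of two triangles with $A=\varnothing$), all of which the paper leaves implicit.
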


\begin{proof}
A burning sequence of $A$ burns all vertices except
possibly vertices of complete graphs that are adjacent to the last
vertex of the burning sequence. These complete graphs burn in at
most 2 rounds. So $bn(G)\leq  bn(G[A])+2$.
\end{proof}

An immediate conclusion from theorem \ref{thm:cluster} is that
$bn(G)\leq d+2$ for each graph with distance to cluster $d$.

\begin{table}[h]
    \centering
    \caption{Samples of random graphs and their distance to cluster}
    \label{tab:cluster}
\begin{tabular}{|l|c|c|c|c|c|c|c|c|}
    \hline
    Name & \rot{Distance to cluster} & \rot{upper bound} & \rot{Ctr-Half} & \rot{Ctr-Far} & \rot{Rnd-Half} & \rot{Rnd-Far} & \rot{DFS-path} & \rot{D-BFS-path} \\\hline
cluster52-4-20-592-1170-0315 & 592 & 27 & \textbf{11} & \textbf{11} & 12 & 13 & 13 & 14 \\\hline
cluster78-4-20-571-1496-0210 & 571 & 26             &11 & \textbf{10} & 11 & 11 & 13 & 12 \\\hline
cluster61-4-20-723-1498-0636 & 723 & 29 & 14 & \textbf{11} & 15 & 12 & 14 & 14 \\\hline
cluster10-4-20-759-886-0613 & 759 & 30 & 33 & \textbf{30} & 31 & 34 & 32 & 44 \\\hline
cluster13-4-20-861-1027-0447 & 861 & 32 & 32 & 29 & 33 & \textbf{28} & 32 & 33 \\\hline
cluster19-4-20-930-1190-0808 & 930 & 33 & 34 & \textbf{31} & 30 & 34 & 34 & 42 \\\hline
cluster73-4-20-512-1399-0496 & 512 & 25 & \textbf{10} & \textbf{10} & 12 & 12 & 13 & 13 \\\hline
\end{tabular}
\end{table}

As mentioned in section \ref{sec:dataset}, we generated random
graphs that each consists of a path of length $d$ and some complete
graphs that are connecting to this path with some edges. Since the
number of edges between each complete graph and the path is less
than the number of vertices of the complete graph, the distance to
cluster in this graph is $d$. Using theorem \ref{thm:cluster}, the
upper bound for burning number of these graphs is $\lceil\sqrt{d}\rceil+2$. Our data set consists of 2000 graphs. We applied
our heuristics on these graphs and compared the result with $\lceil\sqrt{d}\rceil+2$. The results show that in \%98 of graphs the
results meet bounds. Table \ref{tab:cluster} shows some random graphs and the burning number computed by each heuristic.

We also compared all heuristics. Heuristics Ctr-Half dist, Rnd-Half dist and Rnd-Far dist find better solutions among our 6 heuristics.

The winning heuristics that find the minimum solution in
1708 cases are the ones that select the first activator in different
ways and the following ones according to far dist. strategy.
\section{Conclusion and Future Work}
\label{sec:conclusion} In this paper, we developed the first
heuristics for graph burning problem. To study the performance of
our heuristics, we applied them on two types of datasets:
\begin{enumerate}
    \item Known benchmarks for NP-hard problems in graph theory. We selected DIMACS and BHOSLIB.
    Our heuristics computed the optimal solution in 71 graphs out of 78 graphs in DIMACS, and all the 36 graphs in BHOSLIB.
    \item Randomly generated graphs in classes with a known burning number. We generated 2000 $\theta-$graphs and applied our algorithms on
    them. Our heuristics succeeded to compute a burning sequence of
    length less than or equal to known upper bounds in 1208 graphs.
    \item There is an upper bound on the burning number in terms of distance to cluster. We improved this bound and generated a special
    class of 2000 random graphs where each graph is a path of length
    $d$ that is connected to a random number of disjoint complete graphs with
    some edges. In these graphs the distance to cluster is $d$ and
    we proved that their burning number is at most
    $\lceil\sqrt{d}\rceil+2$. In 1961 graphs, the burning number computed by our heuristics
    is less than or equal to this bound.
\end{enumerate}

Since there are very few studies on algorithmic approaches to solve
the burning number, of algorithmic approaches to solve the problem. We are
interested in other algorithmic approaches to solve the problem such
as local search algorithms. On the other hand, the problem has applications in social networks, that are usually disconnected graphs, we are going to develop heuristics for disconnected graphs.

Finally, there is a huge body of research on the spread of influence in
social networks. There are measures called centrality measures to
select seeds (activators) in a social network. It is interesting to
develop algorithms for burning graph using these measures.

We presented the first results on burning number of different data
sets of graphs. We hope that other algorithms will be developed for
this problem and compare their results with ours.

\bibliographystyle{unsrt}
\bibliography{references}
\end{document}